\documentclass[conference]{IEEEtran}
\IEEEoverridecommandlockouts

\usepackage{setspace}
\usepackage{indentfirst}
\usepackage[numbers,sort&compress]{natbib}
\usepackage{xcolor}
\usepackage[pdftex]{graphicx}
\usepackage{amsmath}
\usepackage{amsthm}
\usepackage{amssymb}
\usepackage[ruled,linesnumbered]{algorithm2e}
\usepackage{algpseudocode}
\usepackage{array}
\usepackage[caption=true,font=footnotesize]{subfig}
\usepackage{fixltx2e}
\usepackage{stfloats}
\usepackage{url}
\usepackage{comment}
\usepackage{multirow}
\usepackage{makecell}
\usepackage{ulem}
\usepackage{verbatim}
\usepackage{flushend}
\allowdisplaybreaks[2]
\newtheorem{thm}{Theorem}

\newtheorem{assum}{Assumption}

\theoremstyle{definition}

\def\BibTeX{{\rm B\kern-.05em{\sc i\kern-.025em b}\kern-.08em
    T\kern-.1667em\lower.7ex\hbox{E}\kern-.125emX}}

\renewcommand{\IEEEQED}{\IEEEQEDopen}

\usepackage[T1]{fontenc}
\begin{document}

\title{Optimization-as-a-Service via Multi-Agent Large Language Model for Radio Access Networks}

\author{
  \IEEEauthorblockN{
    Chaoqun You\IEEEauthorrefmark{1},
    Yueyue Dai\IEEEauthorrefmark{2},
    Xingqiu He\IEEEauthorrefmark{1},
    Yue Gao\IEEEauthorrefmark{1},
    Rahim Tafazolli\IEEEauthorrefmark{3},
    Yong Liang Guan\IEEEauthorrefmark{4}
  }
  \IEEEauthorblockA{\IEEEauthorrefmark{1}Fudan University, \IEEEauthorrefmark{2}Huazhong University of Science and Technology, \IEEEauthorrefmark{3}University of Surrey}
  \IEEEauthorblockA{\IEEEauthorrefmark{4}Nanyang Technological University}
}

\maketitle

\begin{abstract}
The physical resource block (PRB) allocation in Radio Access Networks (RANs) traditionally relies on case-by-case manual problem construction or, more recently, learning-based artificial intelligence (AI) methods. However, the sixth-generation (6G) RAN environments confront unprecedented service diversity and exponential dynamics, featuring volatile fluctuations in active base stations (BSs), user scale, and stringent Quality-of-Service (QoS) requirements. Faced with such conditions, both manual models and standard AI algorithms remain fundamentally rigid, lacking the flexibility to adapt and self-evolve. To provide a one-size-fits-all solution, we propose treating the PRB allocation problem as an Optimization-as-a-Service (OaaS) provided by a large language model multi-agent (LLM-MA) system. This fundamentally reshapes RAN resource allocation by utilizing agents to dynamically construct optimization problems and automatically determine objectives tailored to real-time scenarios. Our closed-loop architecture, integrating scene understanding, objective generation, solver, and reflection agents, enables context-aware, self-correcting formulation. To eliminate the computational latency of iterative reflection, we introduce a one-shot reflection distillation mechanism, training a lightweight student model to directly predict refined objective parameters. We theoretically bound the performance gap of this one-shot policy. Experimental results demonstrate our framework achieves near-optimal resource allocation with ultra-low inference latency.
\end{abstract}

\begin{IEEEkeywords}
LLM-MA, OaaS, PRB allocation, RAN
\end{IEEEkeywords}

\section{Introduction}
As wireless communications evolve towards the 6G era, RANs must support unprecedented service diversity through emerging applications like extended reality (XR), connected autonomous vehicles (CAVs), digital twins, \textit{etc} \cite{nguyen2021security}. These advanced applications introduce stringent, heterogeneous QoS and security requirements that, coupled with volatile spatio-temporal traffic bursts and expanding attack surfaces, 6G RANs are subjected to extreme exponential dynamics. Consequently, efficient PRB allocation in such a hyper-dynamic and vulnerable ecosystem becomes profoundly challenging, requiring the network to continuously balance highly competing performance and security metrics without compromising service continuity.

Conventional PRB allocation relies heavily on case-by-case \textit{manual} problem construction based on specific assumed scenarios. More recently, AI approaches, particularly Deep Reinforcement Learning (DRL) \cite{arulkumaran2017deep}, have been actively deployed to bypass explicit problem formulation and directly output allocation decisions. However, this bipartite paradigm is ill-suited for modern volatile RANs where the number of active BSs, user equipments (UEs), and QoS demands fluctuate drastically. A problem formulated for one scenario quickly becomes invalid when the network shifts. Manually reconstructing models for all possible states is intractable. Concurrently, DRL policies sacrifice mathematical interpretability and struggle to generalize across heterogeneous network topologies without extensive retraining.

To address these limitations, we propose a paradigm shift from manual formulation and black-box mapping to autonomous and interpretable problem synthesis. Specifically, we propose an Optimization-as-a-Service (OaaS) framework driven by LLM-MA systems~\cite{he2025llm}, which utilizes agents to dynamically construct optimization problems and formulate objectives tailored to instantaneous scenarios. The framework employs a closed-loop architecture with four components: a \texttt{Scene Understanding Agent} to extract semantic representations from raw RAN states, an \texttt{Objective Generation Agent} to synthesize context-aware optimization parameters, a \texttt{Solver Agent} for PRB allocation, and a \texttt{Reflection Agent} to iteratively refine the formulated problem against QoS constraints.

Iterative reflection enhances adaptability but introduces computational overhead unacceptable for millisecond-level RAN scheduling. Thus, we further propose a one-shot objective generation mechanism via reflection distillation. Using the multi-agent pipeline as a teacher, a lightweight student model is trained to directly predict high-quality objective parameters in a single forward pass. Theoretically, we prove that the performance gap between the one-shot policy and the reflection-refined solution is bounded.

The main contributions of this paper are summarized as follows:

\begin{itemize}
    \item We propose to treat the 6G RAN PRB allocation optimizaion as a service provided by LLM-MA to provide a one-size-fits-all solution for diverse RAN scenarios.
    \item We design a closed-loop LLM-MA framework, OaaS, with four agents to deliver a fully automated, self-evolving PRB sharing solution (Section II).
    \item We further propose a one-shot OaaS mechanism via reflection distillation to guarantee rapid inference in practical applicationsn (Section III).
    \item We validate the effectiveness of OaaS by leveraging GPT-4o~\cite{openai2024gpt4o}. The OaaS  consistently outperforms other baselines, while the one-shot model successfully reduce the decision-making delay to around 0.2s (Section VI).
\end{itemize}

\section{Optimization-as-a-Service Framework} \label{secII}

\begin{figure*}[!t]
\centering
  \includegraphics[width=0.9\linewidth]{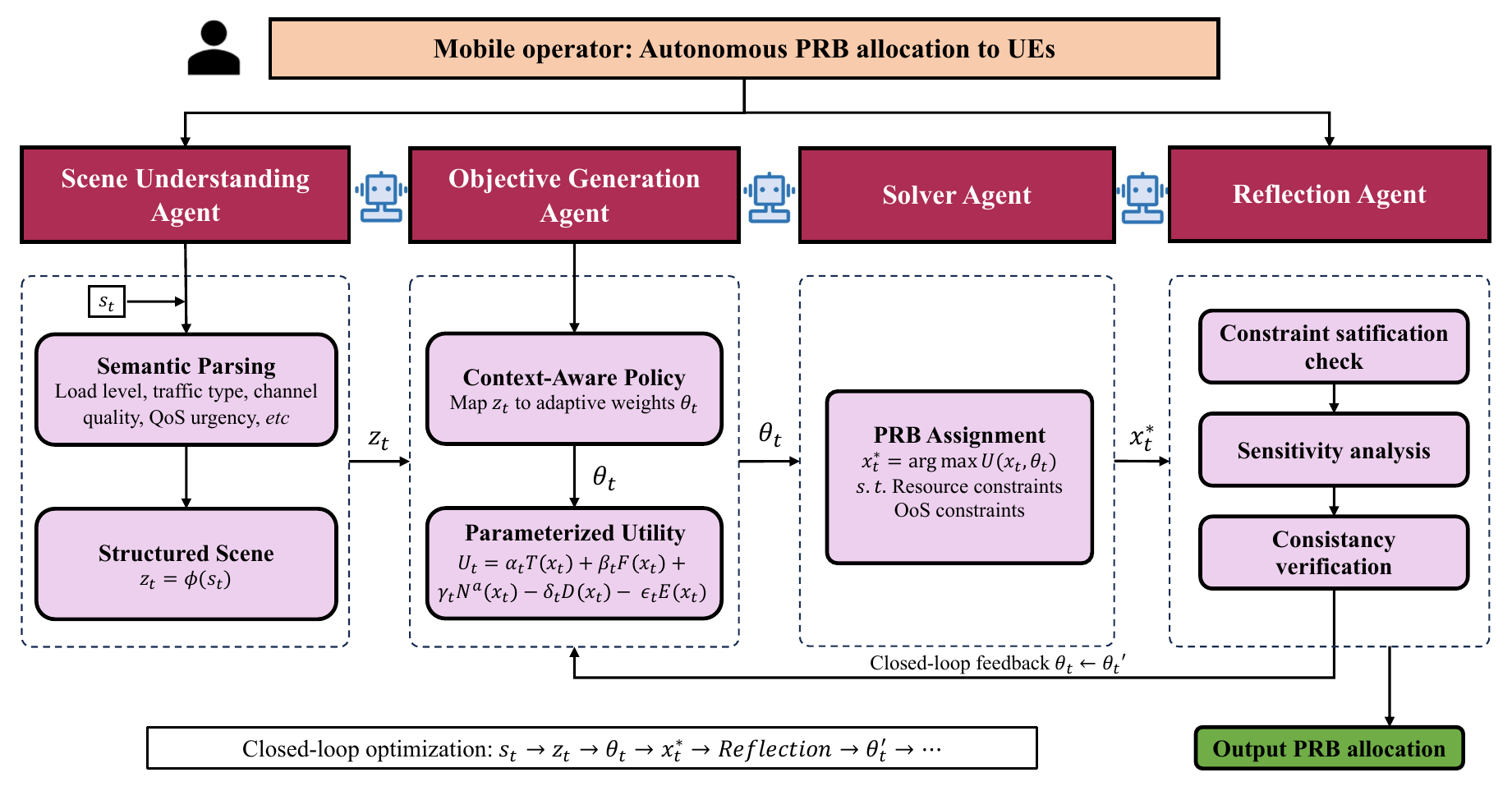}
  \caption{The OaaS Framework}
  \label{fig:overview}
\end{figure*}

\subsection{System Overview} 
We consider a RAN where a BS allocates a set of PRBs $\mathcal{K}$ to a dynamic set of UEs $\mathcal{U}$. The system operates in time slots indexed by $t$. At each time slot $t$, the network state evolves due to user mobility, traffic variations, and channel fluctuations. The target is to determine a PRB allocation strategy that adapts to time-varying network conditions and heterogeneous UE QoS requirements. To avoid manually designing optimization objectives for each scenario, we propose an LLM-MA framework, OaaS, that automatically constructs and refines the optimization problem. As shown in Fig.~\ref{fig:overview}, the framework consists of four agents: \texttt{Scene Understanding Agent}, \texttt{Objective Generation Agent}, \texttt{Solver Agent}, and \texttt{Reflection Agent}. Next we will introduce the four agents in detail.

\subsection{The Scene Understanding Agent}

The \texttt{Scene Understanding Agent} processes the raw RAN state and extracts a structured representation. At each time slot $t$, the network state is represented as $s_t=\{N_t, \mathbf{q}_t,\mathbf{h}_t,\mathbf{d}_t\}$, where $N_t=|\mathcal{U} (t)|$ is the number of users, $\mathbf{q}_t$ is the QoS requirement vector (e.g., latency, reliability), $\mathbf{h}_t$ is the channel state information (e.g., CQI), and $\mathbf{d}_t$ is the traffic demand vector. The state $s_t$ is first processed by the Scene Understanding agent, which extracts a structured semantic representation, $z_t=\Phi(s_t)$, where $z_t$ encodes high-level features such as load level, latency sensitivity, and fairness requirements.

\subsection{The Objective Generation Agent}

The \texttt{Objective Generation Agent} takes the semantic representation $z_t$ and produces parameters for a parameterized optimization objective. Let $x_{i,k}(t)\in\{0,1\}$ denote whether PRB $k$ is assigned to user $i$ during time slot $t$, where $i\in\mathcal{U}, k\in\mathcal{K}$. The allocation matrix is $\{\mathbf{x}_{i,k}(t)|i\in\mathcal{U},k\in\mathcal{K}\}$, subject to $\sum_{i\in\mathcal{U}(t)} x_{i,k}(t)\leq 1$.

Instead of manually defining a fixed objective, we adopt a parameterized utility function:
\begin{subequations}
    \begin{align*}
        U(\mathbf{x}_t;\mathbf{\theta}_t) = & \alpha_t T(\mathbf{x}_t) + \beta_t F(\mathbf{x}_t) + \gamma_t N^a(\mathbf{x}_t) \\
        & - \delta_t D(\mathbf{x}_t) - \epsilon_t E(\mathbf{x}_t),
    \end{align*}
\end{subequations}
where $T(\cdot)$, $F(\cdot)$, $N^a(\cdot)$, $D(\cdot)$ and $E(\cdot)$ represent the system throughput, fairness index, admitted number of UEs, latency and energy consumption, respectively. The parameters $\mathbf{\theta}_t = (\alpha_t,\beta_t,\gamma_t, \delta_t, \epsilon_t)$ determine the relative importance of these metrics, with each parameter taking a value between 0 and 1.

The objective generation is modeled as a policy $\theta_t=f_{\theta}(z_t)$, which maps the semantic state representation to objective parameters. Instead of requiring explicit supervision for optimal parameters, we adopt a \textit{preference learning} approach~\cite{xu2026visionreward}. Specifically, the agent learns from pairwise comparisons: 
\begin{equation}
    (s,\mathbf{\theta}^a) \succ (s,\mathbf{\theta}^b),
\end{equation}
indicating that parameter set $\mathbf{\theta}^a$ leads to better system performance under state $s$. The preference model is trained by minimizing a pairwise ranking loss:
\begin{equation}
    \mathcal{L} = -\log \sigma (P(s,\mathbf{\theta}^a)-P(s,\mathbf{\theta}^b)),
\end{equation}
where $P(\cdot)$ is performance score derived from system metrics. $\sigma(\cdot)$ denotes the sigmoid function, maps the difference between two candidate configurations into a probability value. This formulation enables the model to learn a consistent ranking over candidate objective parameters. Importantly, the model does not learn a fixed parameter vector; instead, it learns a state-dependent mapping from the RAN state to the objective parameters, allowing adaptive and context-aware objective generation under dynamic network conditions.

\subsection{Solver Agent}

The \texttt{Solver Agent} computes the optimal PRB allocation based on the generated objective. That is, given the generated parameters $\mathbf{\theta}_t$, the \texttt{Solver Agent} computes the optimal PRB allocation:
\begin{equation}
    \mathbf{x}^*_t=\arg\max_{\mathbf{x}_t} U(\mathbf{x}_t;\mathbf{\theta}_t),
\end{equation}
subject to resource constraints. The solver can be implemented using various optimization techniques, such as integer linear programming, heuristic algorithms, or reinforcement learning-based solvers, depending on the complexity of the problem and the required solution quality.

\subsection{Reflection Agent}

Finally, the \texttt{Reflection Agent} evaluates whether the selected parameters $\theta_t$ are aligned with system requirments and provides feedback to refine $\theta_t$ iteratively. The reflection process consists of three componets: (i) constraint satisfaction check, which can be formalized as follows,
\begin{subequations}
    \begin{align*}   
        & T(\mathbf{x}_t^*) \geq  T_{\text{min}}, F(\mathbf{x}_t^*) \geq  F_{\text{target}}, N^a  (\mathbf{x}_t^*) \geq N_t, \\
        & D(\mathbf{x}_t^*) \leq  D_{\text{target}}, E(\mathbf{x}_t^*) \leq  E_{\text{target}}.
    \end{align*}
\end{subequations}
(ii) sensitivity analysis, that if $\frac{\partial T}{\partial \alpha}>0$, $\frac{\partial F}{\partial \beta}>0$, and $\frac{\partial D}{\partial \gamma}>0$, and (iii) consistancy verfication. The agent evaluates the robustness of the solution under small perturbations of the objective parameters, ensuring that the generated objective is self-consistent. If any of the above conditions are violated, the \texttt{Reflection Agent} provides feedback to adjust $\theta_t$, forming a closed-loop refinement process.

\subsection{Closed-Loop Optimization Framework}

The overall system operates as:
\begin{equation}
    s_t \rightarrow z_t \rightarrow  \theta_t \rightarrow  x_t^* \rightarrow  Reflection \rightarrow \theta_t'.
\end{equation}
This iterative process enables adaptive objective construction, automatic error correction, and robust performance under dynamic RAN conditions.

Our goal is to jointly optimize the objective generation policy $f_\theta (\cdot)$ and the resulting PRB allocation strategy $\mathbf{x}_t$ such that the long-term system performance is maximized,
\begin{equation}
    \max_{f_{\theta}} \mathbb{E} [P(s_t, \mathbf{x}_t^*)],
\end{equation}
subject to the QoS constraints and resource limitations. The multi-agent framework provides a flexible and adaptive approach to achieve this goal, enabling the system to learn effective objective generation strategies that are tailored to the dynamic RAN environment.

\section{One-shot Objective Generation} \label{secIII}

The proposed multi-agent framework relies on an iterative loop involving objective generation and reflection-based refinement to obtain high-quality objective parameters. While this process improves robustness, it introduces non-negligible computational overhead due to repeated evaluation and adjustment. To address this issue, we propose a one-shot objective generation mechanism, which aims to directly predict high-quality objective parameters in a single forward pass, thereby eliminating the need for iterative refinement during inference.

\subsection{Reflection Distillation}

To enable one-shot objective generation, we adopt a reflection distillation strategy, where the iterative pipeline serves as a teacher model, and the one-shot generator acts as a student model. Let $\theta^{\text{ref}}$ denote the refined objective parameters obtained after the reflection process $\theta^{\text{ref}} = \mathcal{R}(s)$, where $\mathcal{R}(\cdot)$ represents the full pipeline including objective generation, solver and reflection. $\theta^{ref}$ serves as the teacher model. Meanwhile, the student model is defined as a parameterized mapping $\theta^{\text{one}} = f_{\psi }^{\text{one}} (s)$, where $f_{\psi}^{\text{one}}$ is the one-shot objective generation policy with parameters $\psi$. The goal of one-shot inference is to learn a mapping such that $\theta^{\text{one}}$ closely approximates $\theta^{\text{ref}}$, that is, $\theta^{\text{one}}\approx \theta^{\text{ref}}$. 

\subsubsection{Teacher signal generation} 

To construct the training dataset, we sample network states from a distribution $s\sim \mathcal{D}$, which captures diverse traffic patterns, user densities, and QoS requirements. For each sampled state $s_i$, the teacher pipeline is executed to obtain the corresponding refined parameters $\mathbf{\theta}_i^{ref}$. This results in a dataset:
\begin{equation}
    \mathcal{D}_{teach} = \{(s_i,\mathbf{\theta}_i^{ref})\}_{i=1}^N.
\end{equation}
The dataset $\mathcal{D}_{teach}$ serves as supervision for training the one-shot objective generation policy in the subsequent subsection. Notably, the teacher pipeline is executed offline, and its computational complexity does not affect the online deployment phase.

\subsubsection{Student model training}

After generating the teacher signals, we train the one-shot model by minimizing the discrepancy between the student outputs and the teacher outputs. That is, given the dataset $\mathcal{D}_{teach}$, the student policy is optimized by:
\begin{equation}
    \mathcal{L}_{distill} = \|\theta^{one}(s)- \theta^{ref}(s)\|^2.      
\end{equation}
To further align the learned policy with system-level performance, we optionally incorporate a reward-based regularization term:
\begin{equation}
    \mathcal{L}_{reward} = -P(s,\theta^{one}(s)).
\end{equation}
The overall training objective for the one-shot model is a weighted combination:
\begin{equation}
    \mathcal{L} = \mathcal{L}_{distill} + \lambda \mathcal{L}_{reward},
\end{equation}
where $\lambda$ controls the trade-off between distillation fidelity and performance optimization. By minimizing this loss, the one-shot model learns to directly predict high-quality objective parameters that approximate the refined parameters from the iterative pipeline, enabling efficient inference without sacrificing performance.

\subsection{Performance Gap Analysis}

To provide theoretical insight into the effectiveness of the proposed one-shot objective generation, we characterize the performance gap between the one-shot policy and the reflection-refined solution.

\begin{assum}
The optimal solution mapping $x^*(s,\theta)$ is Lipschitz continuous with respect to $\theta$, \textit{i.e.},
\begin{equation}
    ||x^*(s,\theta_1)-x^*(s,\theta_2)|| \leq K ||\theta_1-\theta_2||,
\end{equation}
where $K>0$ is a constant.
\end{assum}

\begin{assum}
The system-level performance function $P(s,x)$ is Lipschitz continuous with respect to the allocation $\mathbf{x}$, \textit{i.e.},
\begin{equation}
    |P(s,x_1)-P(s,x_2)| \leq M ||x_1-x_2||,
\end{equation}
where $M>0$ is a constant.
\end{assum}

\begin{thm}
Under Assumptions 1 and 2, the performance gap between the one-shot objective generation policy and the reflection-refined solution is bounded as
\begin{equation}
    |P(s,x^*(s,\theta^{one}))-P(s,x^*(s,\theta^{ref}))| \leq KM ||\theta^{one}-\theta^{ref}||.
\end{equation}
\end{thm}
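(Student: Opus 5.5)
The bound follows by chaining the two Lipschitz assumptions, with the allocation $x$ as the intermediate variable. Fix a state $s$ and write $x^{one}=x^*(s,\theta^{one})$ and $x^{ref}=x^*(s,\theta^{ref})$. These are the allocations the \texttt{Solver Agent} returns for the one-shot and reflection-refined parameters, respectively. Both are feasible allocations in the common domain of $P(s,\cdot)$, so Assumption 2 may be applied to the pair $(x^{one},x^{ref})$.

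The steps, in order, are as follows.

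First, apply Assumption 2 with $x_1=x^{one}$ and $x_2=x^{ref}$:
\begin{equation*}
|P(s,x^{one})-P(s,x^{ref})|\le M\|x^{one}-x^{ref}\|.
\end{equation*}

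Second, apply Assumption 1 with $\theta_1=\theta^{one}$ and $\theta_2=\theta^{ref}$:
\begin{equation*}
\|x^{one}-x^{ref}\|\le K\|\theta^{one}-\theta^{ref}\|.
\end{equation*}

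Third, since $M>0$, multiplying the second inequality by $M$ preserves its direction. Substituting it into the first gives
\begin{equation*}
|P(s,x^*(s,\theta^{one}))-P(s,x^*(s,\theta^{ref}))|\le KM\|\theta^{one}-\theta^{ref}\|,
\end{equation*}
which is the claimed bound.

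There is no genuine obstacle; the argument is simply that a composition of Lipschitz maps is Lipschitz. The only point needing care is well-definedness. The argmax $x^*(s,\theta)$ must be single-valued, or a fixed selection must be made, so that Assumption 1 refers to the same allocations the solver actually outputs. The norm on $x$ must also be the same in both assumptions, since the chaining depends on it. I would state these conventions at the start of the proof.

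Finally, I would add a remark connecting the result to training. Taking expectations over $s\sim\mathcal{D}$ and applying Jensen's inequality bounds the expected gap by $KM\sqrt{\mathbb{E}\,\mathcal{L}_{distill}}$. This shows that minimizing the distillation loss directly controls the performance gap.
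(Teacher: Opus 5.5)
Your proof is correct and follows the paper's argument: it applies Assumption 2 to the two solver outputs $x^*(s,\theta^{one})$ and $x^*(s,\theta^{ref})$, then Assumption 1 bounds their distance by $K\|\theta^{one}-\theta^{ref}\|$, which yields the constant $KM$. Your well-definedness caveats and the Jensen remark bounding the expected gap by $KM\sqrt{\mathbb{E}\,\mathcal{L}_{distill}}$ are valid additions that the paper does not make.
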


\begin{IEEEproof}
The performance gap can be decomposed as follows:
\begin{subequations}
   \begin{align}
    &|P(s,x^*(s,\theta^{one}))-P(s,x^*(s,\theta^{ref}))| \\
    &\leq |P(s,x^*(s,\theta^{one}))-P(s,x^*(s,\theta^{ref}))| \\
    &\leq M ||x^*(s,\theta^{one})-x^*(s,\theta^{ref})|| \\
    &\leq KM ||\theta^{one}-\theta^{ref}||.
    \end{align} 
\end{subequations}
The first inequality follows from the Lipschitz continuity of the performance function, and the second inequality follows from the Lipschitz continuity of the optimal solution mapping. This establishes that the performance gap is linearly bounded by the distance between the one-shot parameters and the reflection-refined parameters, providing a theoretical guarantee of the one-shot objective generation approach.
\end{IEEEproof}

\subsection{One-Shot Objective Generation Algorithm}

The one-shot objective generation algorithm can be summarized in Alg.~\ref{alg:one-shot}. The algorithm consists of two main phases: teacher signal generation and student model training. In the first phase, we sample network states and execute the reflection pipeline to obtain refined objective parameters, constructing the teacher dataset. In the second phase, we iteratively train the one-shot model by minimizing the combined distillation and reward-based loss until convergence. Once trained, the one-shot model can directly predict high-quality objective parameters for new network states without requiring iterative refinement, enabling efficient and adaptive optimization in dynamic RAN environments.

\begin{algorithm}[t]
\caption{One-Shot Objective Generation via Reflection Distillation}
\label{alg:one-shot}                                                   
\KwIn{Dataset of network states $\mathcal{D} = \{s_i\}_{i=1}^N$, teacher pipeline $\mathcal{R}(\cdot)$, student model $f_{\psi}^{one}(\cdot)$, reward function $R(s,\theta)$, distillation weight $\lambda$.}
\KwOut{Trained one-shot objective generation model $f_{\psi}^{one}(\cdot)$.}
\For{each $s_i$ in $\mathcal{D}$}{
    $\theta_i^{ref} \leftarrow \mathcal{R}(s_i)$ \tcp*{Generate teacher signal}
    $\mathcal{D}_{teach} \leftarrow \mathcal{D}_{teach} \cup \{(s_i,\theta_i^{ref})\}$ \tcp*{Construct teacher dataset}
}
\While{not converged}{
    Sample a mini-batch $\{(s_j,\theta_j^{ref})\}$ from $\mathcal{D}_{teach}$ \tcp*{Sample mini-batch}
    $\theta_j^{one} \leftarrow f_{\psi}^{one}(s_j)$ \tcp*{Student model prediction}
    $\mathcal{L}_{distill} \leftarrow \frac{1}{B} \sum_j \|\theta_j^{one}-\theta_j^{ref}\|^2$ 
    $\mathcal{L}_{reward} \leftarrow -\frac{1}{B} \sum_j R(s_j,\theta_j^{one})$ 
    $\mathcal{L} \leftarrow \mathcal{L}_{distill} + \lambda \mathcal{L}_{reward}$ 
    Update $\psi$ by minimizing $\mathcal{L}$ \tcp*{Update student model}
}
\end{algorithm}   

\section{Performance Evaluation}

\subsection{Setup}
To comprehensively evaluate the OaaS framework and the one-shot reflection distillation mechanism, we develop a system-level simulator compliant with 3GPP 5G New Radio (NR) specifications~\cite{3gpp.38.104}. We model a macro-cell BS serving a dynamically varying set of UEs within a coverage radius of 250 meters. The network operates in the Sub-6 GHz (FR1) band at a carrier frequency of 3.5 GHz. With a system bandwidth of 100 MHz and a subcarrier spacing (SCS) of 30 kHz ($\mu=1$), the BS provides a total of $K = 273$ PRBs per time slot. The wireless channel is characterized by the 3GPP Urban Micro (UMi) path-loss model superimposed with Rayleigh block fading~\cite{3gpp.38.901}, and thermal noise power is set to $-114$ dBm/MHz. The LLM being called is GPT-4o.

\noindent \textbf{1) Evaluation Scenarios}: To validate the necessity and adaptability of dynamic objective construction, our simulations are conducted across four highly dynamic and challenging operational scenarios,
\begin{itemize}
    \item \textbf{Scenario 1: Overloaded State (Congestion)}: The UE population surges to a large $N$ (e.g., $N=300$), causing total service demand to vastly exceed the capacity of the 273 PRBs. This scenario tests the agent’s ability to prevent starvation constraints while maximizing the number of sustainably served users ($\alpha \uparrow$, $\gamma \uparrow$).
    \item \textbf{Scenario 2: Underloaded State (Resource Conservation)}: The cell enters a sparse state (e.g., $N=10$). Rather than aggressively maximizing throughput via unnecessary resource consumption, the evaluation focuses on whether the system can reformulate the objective to minimize active PRBs while satisfying base QoS limits, effectively supporting energy-saving operations. ($\alpha \downarrow$, $\epsilon \uparrow$).
    \item \textbf{Scenario 3: Heterogeneous Traffic Burst}: The cell, initially serving background eMBB traffic, experiences a sudden influx of ultra-reliable low-latency communications (URLLC) packets (e.g., constituting 30$\%$ of the active traffic). This evaluates the agent's capability to instantly heavily penalize latency ($\delta \uparrow$).
\end{itemize}

\noindent \textbf{2) Baselines}: We compare our proposed OaaS framework against the following baselines,
\begin{itemize}
    \item \textbf{Static Max-Rate (Greedy)}: A fixed optimization formulation heavily weighting sum-throughput ($\alpha=1, \beta=0, \gamma=0, \delta=0, \epsilon=0$).
    \item \textbf{Static Proportional Fair (PF)}: A classic fixed heuristic promoting baseline sum-log throughput fairness ($\alpha=1, \beta=1, \gamma=0, \delta=0, \epsilon=0$).
    \item \textbf{DRL}: A Proximal Policy Optimization (PPO)-based black-box algorithm that bypasses objective formulation entirely and directly maps network states to PRB allocation matrices.
    \item \textbf{Iterative OaaS (Teacher)}: Our proposed multi-agent framework utilizing the iterative reflection loop to dynamically generate $\theta_t$. This represents the performance upper bound.
    \item \textbf{One-Shot OaaS (Student)}: Our proposed reflection-distilled lightweight model, aiming for equal performance with deterministic inference latency.
\end{itemize}

\subsection{Effectiveness of the OaaS Framework}

\begin{figure*}[!t]
  \centering
  \subfloat{
      \includegraphics[width=3.2in]{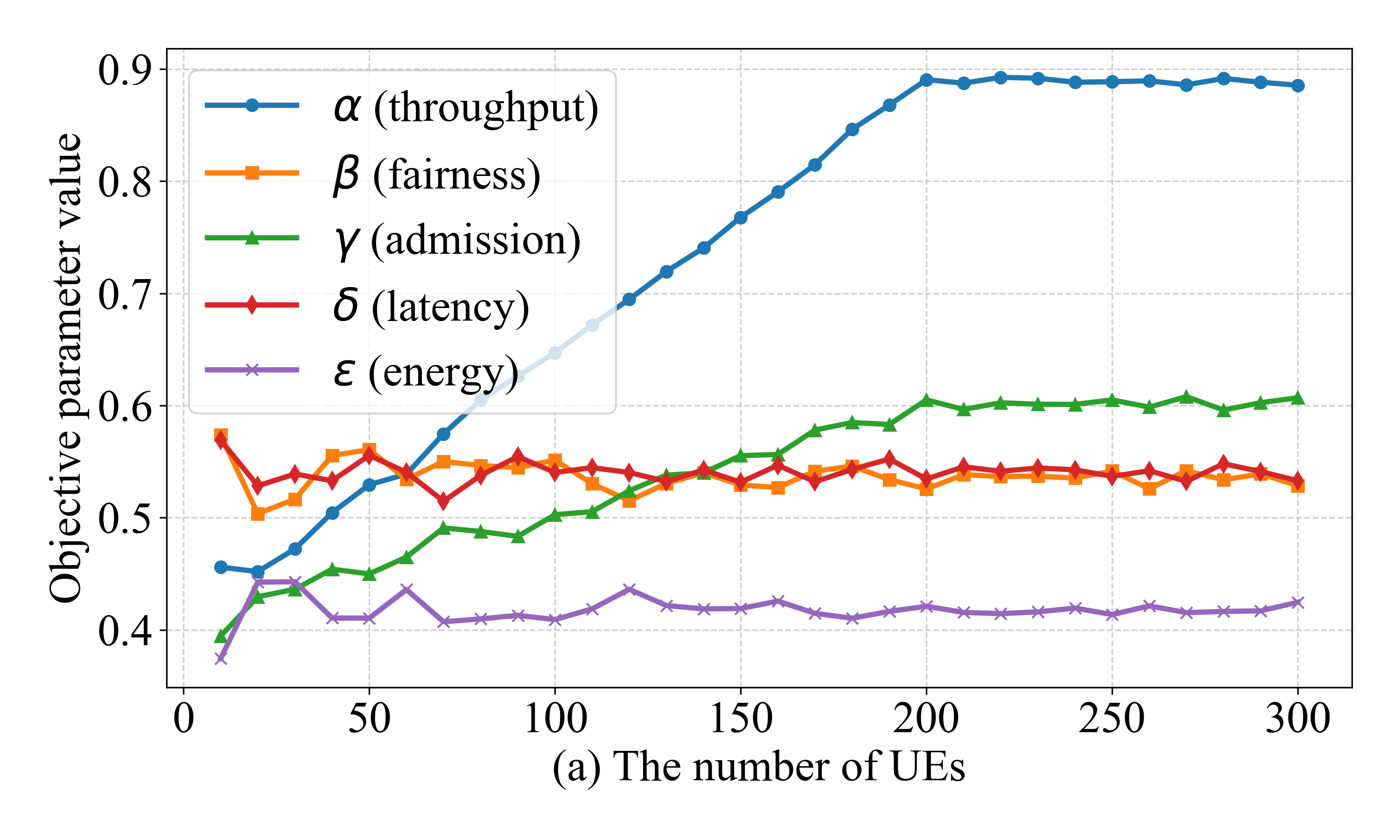}
      \label{fig:params:a}}
      \hfil
  \subfloat{
      \includegraphics[width=3.2in]{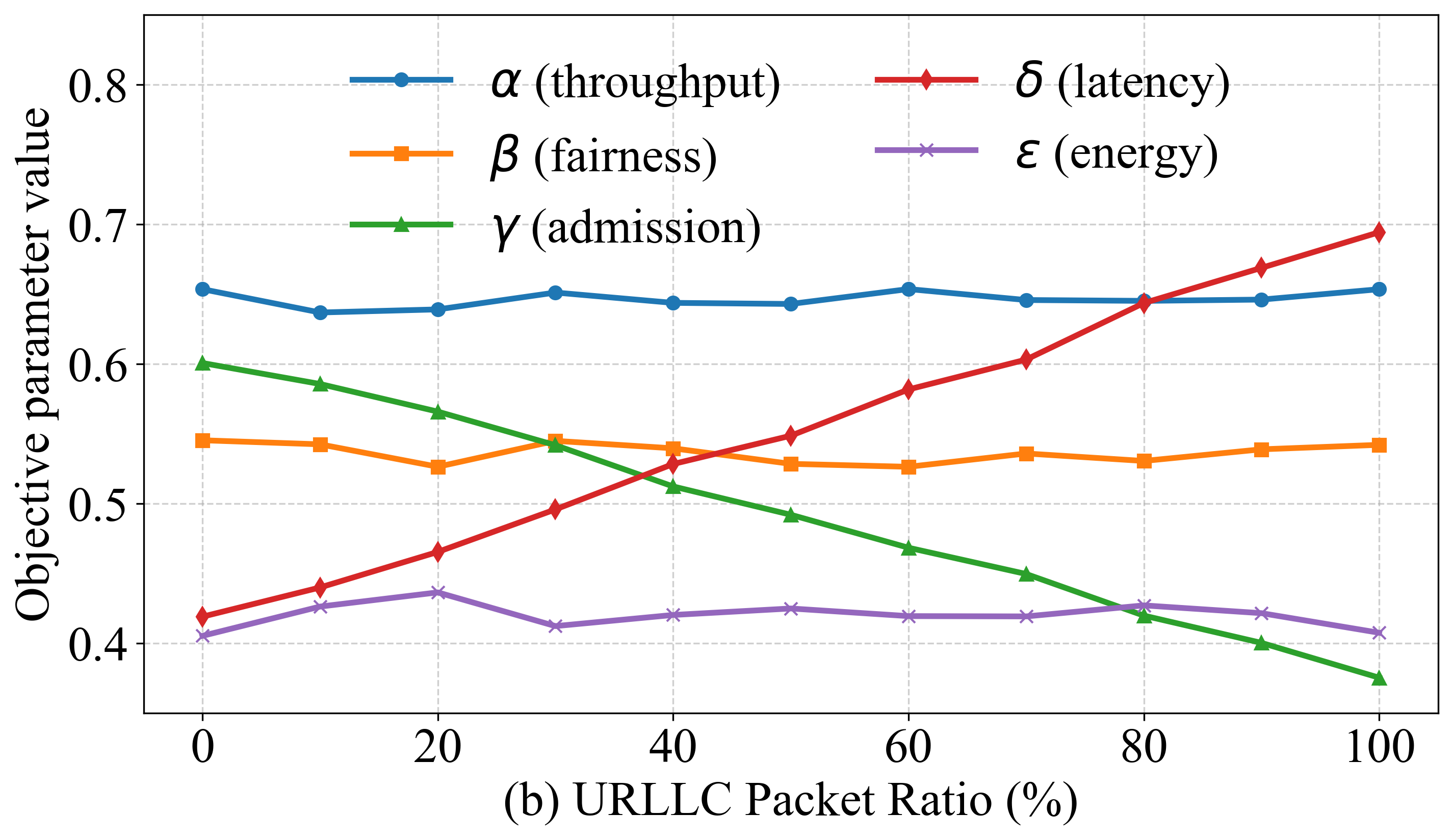}
      \label{fig:params:b}}
  \caption{The performance of the OaaS framework under different scenarios, where (a) shows the objective parameter values under different number of users (Scenario 1 and 2), and (b) shows the objective parameter values under different URLLC packet ratios (Scenario 3).}
  \label{fig:params}
\end{figure*}

Fig. \ref{fig:params}(a) illustrates the adaptive behavior of the OaaS LLM-MA framework in response to varying user densities, spanning from underloaded (e.g., 10 UEs) to overloaded (e.g., 190 UEs) scenarios. In lightly loaded networks, the framework intelligently reduces the throughput weight $\alpha$ (from approximately 0.46 to 0.45) and increases the energy conservation penalty $\epsilon$ (peaking at 0.44), thereby minimizing unnecessary PRB utilization while ensuring QoS compliance. This energy-efficient strategy aligns with resource conservation objectives in sparse conditions. Conversely, as UE count escalates, $\alpha$ rises sharply to 0.87, and the admission weight $\gamma$ increases from 0.39 to 0.58, enabling the system to prioritize aggregate throughput and maximize user admission to combat congestion. Notably, fairness ($\beta$) and latency ($\delta$) parameters remain stable around 0.53-0.57 and 0.53-0.57, respectively, preserving equitable resource distribution and latency constraints across load variations.

Fig. \ref{fig:params}(b) demonstrates the framework's responsiveness to heterogeneous traffic bursts, specifically URLLC packet ratios ranging from 0\% to 100\%. As latency-sensitive traffic dominates, $\delta$ escalates from 0.42 to 0.69, reflecting heightened emphasis on minimizing delays. Concurrently, $\gamma$ declines from 0.60 to 0.38, indicating a strategic deprioritization of new admissions to safeguard existing low-latency services. Throughput ($\alpha$) and fairness ($\beta$) exhibit minimal fluctuations (0.64-0.65 and 0.53-0.55), underscoring the framework's ability to dynamically recalibrate objectives without compromising core performance metrics.

\subsection{Effectiveness of the One-Shot OaaS}

\begin{figure*}[!t]
  \centering
  \subfloat{
      \includegraphics[width= 3.3 in]{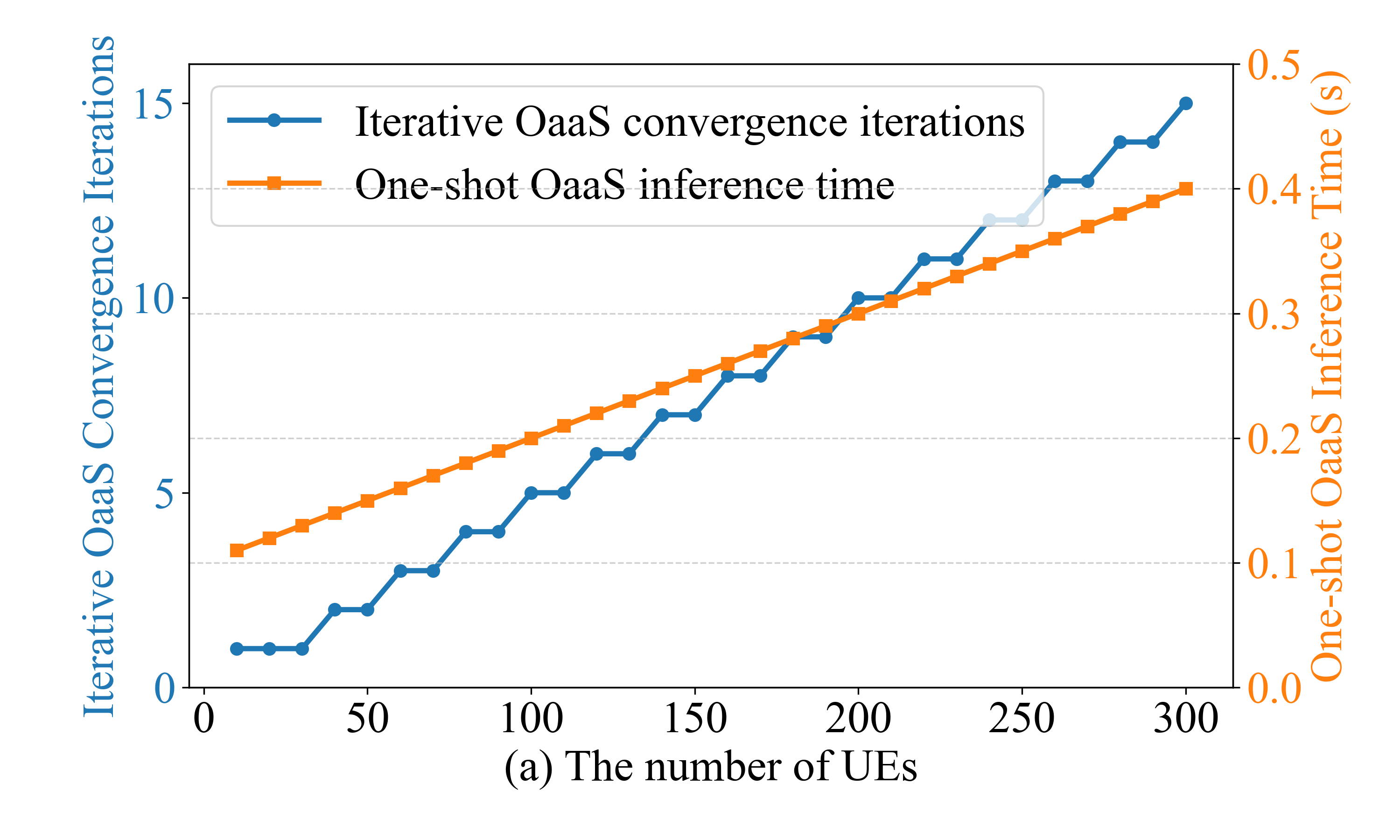}
      \label{fig:iter:a}}
      \hfil
  \subfloat{
      \includegraphics[width=3.3 in]{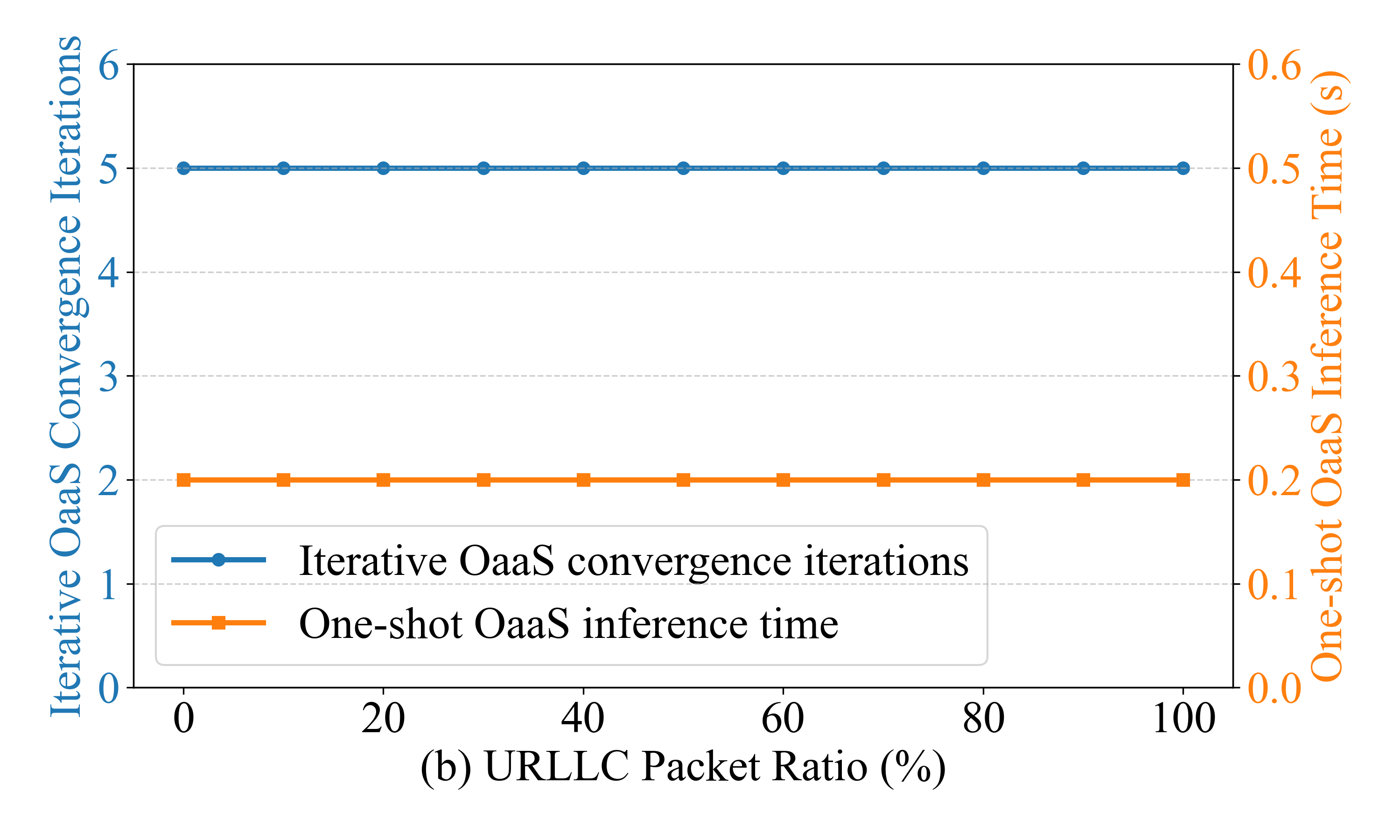}
      \label{fig:iter:b}}
  \caption{The effectiveness of the one-shot OaaS framework, where (a) shows the convergence iterations and one-shot inference time under different numbers of UEs, and (b) shows the convergence iterations and one-shot inference time under different URLLC packet ratios.}
  \label{fig:iter}
\end{figure*}

Fig. \ref{fig:iter}(a) shows the computational behavior of iterative OaaS and one-shot OaaS as the number of UEs increases. The iterative OaaS requires more reflection iterations under heavier network loads, increasing from only a few iterations in sparse scenarios to about 15 iterations when the number of UEs reaches 300. This confirms that the reflection loop becomes more costly as the PRB allocation problem becomes more complex. In contrast, the one-shot OaaS directly predicts the refined objective parameters with a single forward pass, keeping the inference time at a low level even under high user density.

Fig. \ref{fig:iter}(b) further evaluates the impact of heterogeneous URLLC traffic. When the URLLC packet ratio varies from 0\% to 100\%, the iterative OaaS maintains around five convergence iterations, while the one-shot OaaS keeps a stable inference time of about 0.2 s. This indicates that the distilled one-shot model removes the need for online reflection without being sensitive to traffic composition changes. Therefore, one-shot OaaS can significantly reduce online computational overhead while preserving the adaptive objective-generation capability of the original OaaS framework.

\subsection{QoS Performance Analysis}

\begin{figure}[!t]
  \centering
  \subfloat{
      \includegraphics[width=1.6in]{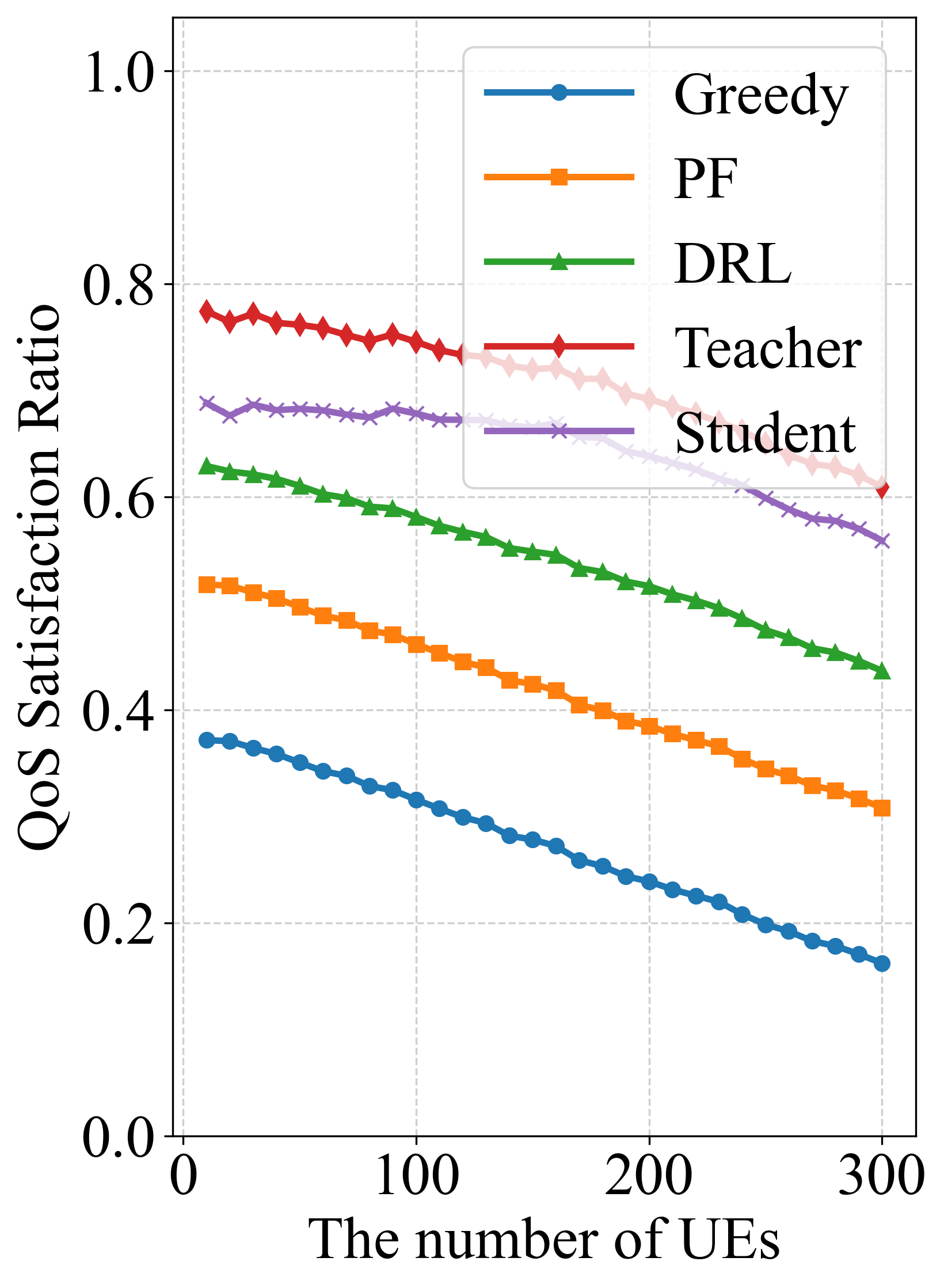}
      \label{fig:qos:a}}
      \hfil
  \subfloat{
      \includegraphics[width=1.6in]{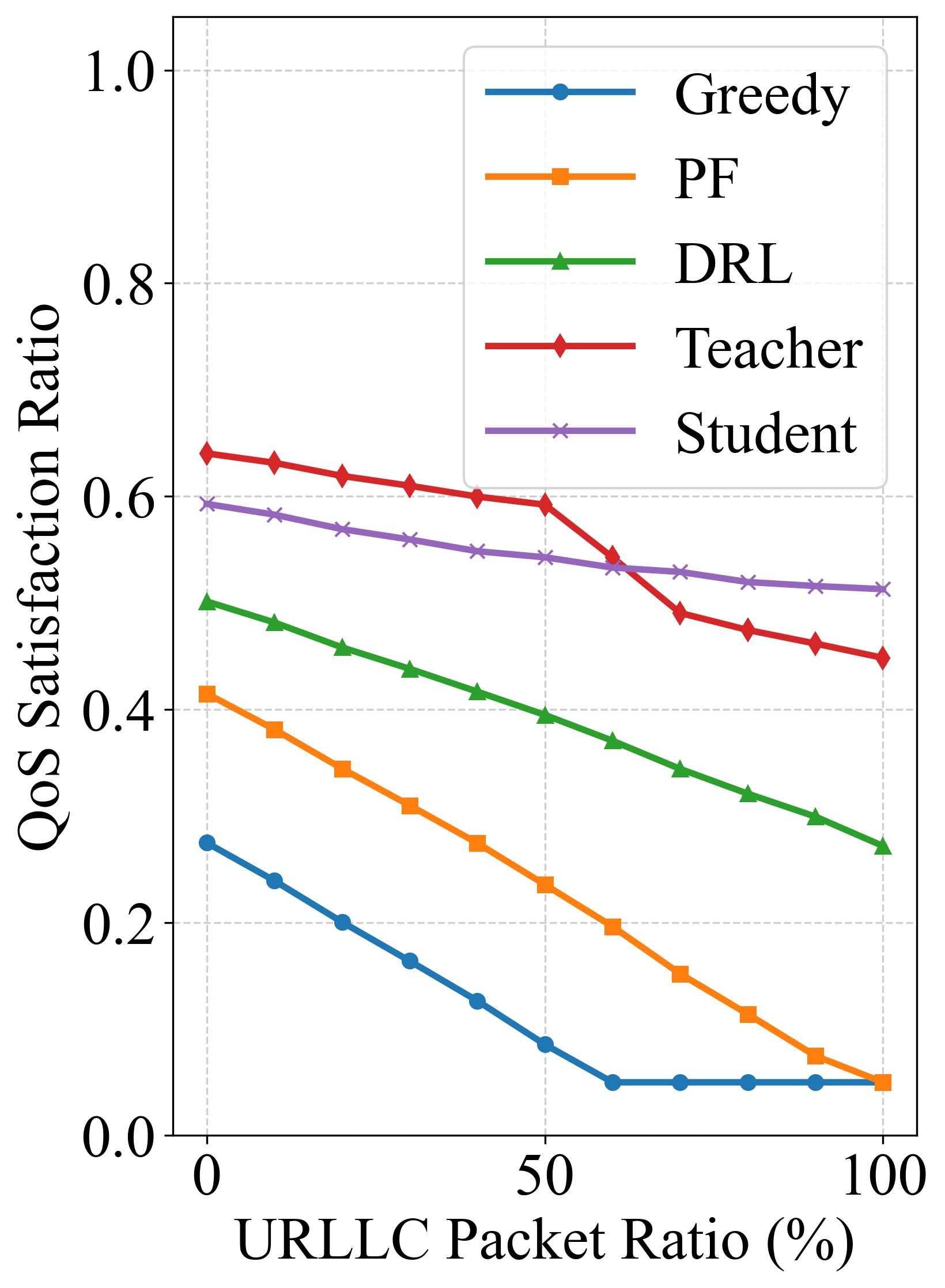}
      \label{fig:qos:b}}
  \caption{QoS performance comparison under different scenarios, where (a) shows the QoS satisfaction ratio under different numbers of UEs, and (b) shows the QoS satisfaction ratio under different URLLC packet ratios with 300 UEs.}
  \label{fig:qos}
\end{figure}

Fig. \ref{fig:qos} compares the QoS satisfaction ratio of different methods under dynamic RAN conditions. In Fig. \ref{fig:qos}(a), the QoS satisfaction ratio decreases as the number of UEs increases, since more users compete for limited PRB resources. Nevertheless, iterative OaaS consistently achieves the best performance, and one-shot OaaS closely follows the teacher model, showing that adaptive objective generation can better handle congestion than the static Greedy, PF, and DRL baselines. In Fig. \ref{fig:qos}(b), increasing the URLLC packet ratio also reduces QoS satisfaction because of stricter latency requirements. The static baselines degrade rapidly under latency-critical traffic, whereas the two OaaS-based methods maintain higher QoS satisfaction by dynamically adjusting the objective toward latency-aware resource allocation.

\section{Conclusions}

In this paper, we have proposed an OaaS framework for adaptive PRB allocation in dynamic 6G RANs. Leveraging a LLM-MA architecture, our framework dynamically constructs context-aware optimization objectives based on real-time network states and heterogeneous QoS/security requirements. To eliminate the online latency of iterative reflection, we have designed a one-shot reflection distillation mechanism where a lightweight student directly predicts refined parameters. We have theoretically bounded the performance gap of this one-shot policy. Simulations confirmed that our framework robustly adapts to volatile conditions, while the one-shot model drastically reduces inference latency with near-optimal QoS satisfaction, establishing OaaS as a flexible and scalable paradigm for future intelligent resource allocation.

{
\footnotesize
\bibliographystyle{IEEEtran}
\bibliography{IEEEabrv,IEEEexample}
}

\end{document}